\newcommand{\Z}{\mathbb{Z}}
\newcommand{\R}{\mathbb{R}}
\soulregister{\ref}{1}
\soulregister{\cite}{1}
\soulregister{\pageref}{1}
\soulregister{\eqref}{1}
\newcommand{\blue}[1]{#1}
\newtheorem{theorem}{Theorem}[section]
\newtheorem{definition}[theorem]{Definition}
\newtheorem{lemma}[theorem]{Lemma}
\newcommand{\qm}[1]{``#1''}
\newcommand{\eg}{e.g.\ }
\renewcommand{\set}[1]{\{#1\}}
\newenvironment{varsubequations}[1]
 {%
  \addtocounter{equation}{-1}%
  \begin{subequations}
  \def\@currentlabel{#1}%
 }
 {%
  \end{subequations}\ignorespacesafterend
 }
\DeclarePairedDelimiter{\ceil}{\lceil}{\rceil}
\title{Quantum Subroutines in Branch-Price-and-Cut for Vehicle Routing}
\author[1,2,*]{Friedrich Wagner}
\author[1]{Frauke Liers}
\affil[1]{Department of Data Science, University of Erlangen-Nürnberg}
\affil[2]{Fraunhofer Institute for Integrated Circuits, Nürnberg}
\affil[*]{\texttt{\small friedrich.wagner@iis.fraunhofer.de}}
\begin{document}
	
	\maketitle
	
	\begin{abstract}
	Motivated by recent progress in quantum hardware and algorithms,
	researchers have developed quantum heuristics for optimization problems, aiming for advantages over classical methods.
	To date, quantum hardware is still error-prone and
	limited in size such that quantum heuristics cannot be scaled to relevant problem sizes
	and are often outperformed by their classical counterparts.
	Moreover, if provably optimal solutions are desired, one has
	to resort to classical exact methods.
	As, however, quantum technologies may improve considerably in future, 
	we demonstrate in this work how quantum heuristics with
	limited resources can be integrated in large-scale exact
	optimization algorithms for NP-hard problems.
	To this end, we consider vehicle routing as a
	prototypical NP-hard problem.
	We model the pricing and separation subproblems arising in a branch-price-and-cut algorithm as quadratic unconstrained binary optimization problems.
	This allows to use established quantum heuristics like quantum annealing or the quantum approximate optimization algorithm for their solution.
	A key feature of our algorithm is that it profits not only from the best solution returned by the quantum heuristic but from all solutions below a certain cost threshold,
	thereby exploiting the inherent randomness is quantum algorithms.
	Moreover, we reduce the requirements on quantum hardware since the subproblems, which are solved via quantum heuristics, are smaller than the original problem.
	We provide a \blue{proof-of-concept} experimental study comparing quantum annealing to simulated annealing and to established classical algorithms in our framework.
	While our hybrid quantum-classical approach is still
	outperformed by purely classical methods, our results reveal
	that both pricing and separation may be well suited for quantum heuristics
	if quantum hardware improves.
\end{abstract}

\section{Introduction}
Combinatorial optimization problems arise naturally in various fields of industry and society.
Although being NP-hard in general,
modern algorithms and implementations routinely solve even large instances in reasonable time~\cite{Bixby2002,Koch2022}.
Nonetheless, there exist relevant problem instances which remain intractable for state-of-the-art methods~\cite{Gleixner2021}.
As a result, recent progress in quantum hardware and the discovery of quantum algorithms
with provable advantages have motivated researchers to develop quantum approaches for such problems~\cite{Abbas2023,bochkarev2024}.
However, exact quantum algorithms for combinatorial optimization by far exceed the capabilities of existing quantum hardware~\cite{ammann2023realisticruntimeanalysisquantum}.
Therefore, researchers focus on quantum heuristics, which are less hardware-demanding than exact approaches.
Popular examples are quantum annealing (QA)~\cite{Albash2018,McGeoch2023} and the quantum approximate optimization algorithm (QAOA)~\cite{Farhi_2014,Harrigan_2021}.
It remains open whether future quantum algorithms and hardware will outperform their classical counterparts.
Currently, the size limitation and noise vulnerability of existing quantum devices hinder the scaling of quantum heuristics to relevant problem sizes~\cite{Guerreschi2019,JuengerLobe_2021}.
Moreover, when aiming at provable optimality, classical exact methods are inevitable.
This work aims at bringing quantum optimization closer to practical utility by integrating quantum subroutines in established integer optimization methods.
	
In particular, we employ quantum annealing as a pricing and separation heuristic in a branch-price-and-cut algorithm for the capacitated vehicle routing problem (CVRP).
This hybrid approach bears two major advantages compared to solving the CVRP directly via QA.
First, larger instances can be handled since the subproblems solved via QA are considerably smaller than the original problem.
Nonetheless, both subproblems are NP-hard such that we cannot expect to solve them efficiently via classical algorithms.
Second, multiple solutions \blue{of different quality} to a single subproblem are valuable for the overall algorithm.
QA typically delivers several thousand solutions in less than one second~\cite{Tassef2022}.
In a direct application of QA, all but the best solution are discarded.
In contrast, our \blue{hybrid} branch-price-and-cut approach benefits from \blue{all} solutions \blue{below a certain cost threshold}.
Although CVRP is NP-hard, even large instances can be solved to global
optimality by modern
branch-(price-)and-cut approaches~\cite{Lysgaard_2004,Pecin2017,Pessoa2019}.
We thus expect that our hybrid
algorithm is outperformed by classical methods when using current quantum hardware.
This is also confirmed
by our experiments. Nevertheless, quantum
annealing may improve in the future such that the hybrid algorithm is not only interesting from an algorithmic point of view,
but in the future also has the potential to be beneficial in practice. 
To summarize, this work demonstrates how limited quantum resources can help to solve problems of relevant size if quantum hardware advances.

The CVRP is an archetypical combinatorial optimization problem which is highly relevant in applications~\cite{Toth2002,Toth2014}.
Given a set of customers with individual demands and a depot with a fleet of trucks of limited capacity,
the CVRP asks for an assignment of routes to trucks such that each customer is visited and no route exceeds the capacity.
The objective is to minimize the aggregate route length.
State-of-the-art exact algorithms for the CVRP are mostly branch-price-and-cut approaches, which can solve instances with several hundreds
of customers to optimality in the order of hours~\cite{Pecin2017,Pessoa2019}.
Such a branch-price-and-cut algorithm alternatingly solves a master problem and a subproblem, called the \emph{pricing problem}.
This procedure is also known as \emph{column generation}.
While the master problem is a linear program, which can be solved efficiently, the pricing problem is
typically NP-hard. This is also the case in this work.
Apart from column generation, branch-price-and-cut utilizes cutting plane separation to speed up the algorithm.
Unlike pricing, separation is not ultimately required for the algorithm to terminate but often speeds up the solution process by several orders of magnitude~\cite{MARCHAND2002}.
Applying QA requires modeling both the pricing and the separation
subproblem as quadratic unconstrained binary optimization (QUBO).
In this work, we develop {such} QUBO models{, aiming at low dimension and density}, which is crucial for a successful application of existing quantum hardware.
We then integrate the pricing and separation models in \blue{subroutines of} a
branch-price-and-cut algorithm for CVRPs.
Finally, we evaluate the integrated \blue{subroutines} on actual quantum hardware.
To this end, we benchmark pricing and separation via QA against simulated-annealing and established integer programming methods on several instances from literature and form a real-world application.
Our results indicate that QA can be a promising heuristic for pricing and separation if quantum hardware advances.

\paragraph{Related work.}
The application of quantum annealing to vehicle routing problems (VRPs) has been studied in literature before.
In~\cite{Irie2019}, a time-scheduled variant of the VRP is directly solved on quantum annealing hardware.
Due to the size limitations of current quantum hardware, the authors consider only toy examples.
On the contrary, the authors of~\cite{Crispin2013,Syrichas2017} employ a purely classical simulation of QA to heuristically solve VRPs of practically relevant size.
Their results show advantages of the classically simulated quantum-annealing algorithm over a standard simulated-annealing algorithm.
Simulated annealing is an established heuristics for combinatorial optimization~\cite{Bertsimas1993}, which is often used as a benchmark for quantum annealing
since it is conceptually similar.
Ref.~\cite{Borowski2020} proposes a hybrid quantum-classical heuristic for VRPs, which is based on QA.
Their experiments indicate that the proposed hybrid method may outperform classical heuristics.
Ref.~\cite{Harikrishnakumar2020} develops a QUBO model for a multi-depot variant of the CVRP.
However, no evaluation of the proposed QUBO model on actual quantum hardware is performed.
The authors of~\cite{Feld2019} apply a well-known heuristic decomposition of the CVRP
into a knapsack problem and several traveling salesperson problems.
They compare solving either all or only some of these subproblems via QA.
Their experiments do not show a clear advantage of QA over classical heuristics.
Ref.~\cite{Bao2021} solves a variant of the VRP by a QA-inspired, classical heuristic.
An experimental evaluation shows advantages of the proposed method over two non quantum-inspired heuristics.
In~\cite{Tambunan2022}, a multi source-destination VRP is solved by QA.
Computational tests show an advantage of QA over simulated annealing.
Crucially, all studies mentioned so far give \emph{heuristic} solutions to VRPs.
On the contrary, we integrate QA in an \emph{exact} algorithm
which solves the CVRP to proven optimality.

Similar to the application of QA to VRPs, the integration of quantum heuristics in classical exact decomposition algorithms is a field of active research.
Ref.~\cite{Franco2022} proposes to use QA for solving the master problem in a Benders decomposition
for specific mixed-integer programs arising in the context of neural network certification.
The authors provide an experimental evaluation which shows that the Benders subproblem often exceeds the resources of currently available quantum hardware.
The follow-up work~\cite{Franco2023} proposes to use QA also for the pricing subproblem in a column generation approach.
Their results show that the column generation subproblem is more suited for quantum approaches than the Benders subproblem.
The authors of~\cite{OssorioCastillo2022} apply QA to the pricing problem in a column generation approach for a process scheduling problem.
However, they do not compare their algorithm against classical methods.
Similarly, Ref.~\cite{Hirama2023} proposes a column generation approach to constrained quadratic binary programs which solves the pricing step via QA.
They compare quantum annealing to simulated annealing and a branch-and-cut solver applied directly to the constrained quadratic binary program.
Their results do not show a clear advantage of QA over classical algorithms.
Related, the authors of~\cite{da_Silva_Coelho_2023} apply QA to the pricing problem in a column generation algorithm for graph coloring.
Their numerical experiments indicate advantages of QA over classical pricing methods.
The authors of~\cite{Svensson_2023} employ a variant of QAOA as a primal heuristic in a branch-and-price algorithm.
Similarly, Ref.~\cite{wagner2023enhancing} develops a quantum-classical heuristic for QUBO which is well-suited for integration in branch-and-cut algorithms.
In contrast to the studies mentioned so far,
we employ QA not only for pricing but also as a separation heuristic.
Moreover, we consider the CVRP as an example problem,
for which branch-price-and-cut approaches have proven particularly successful.
We compare QA to simulated annealing and to established pricing and separation algorithms from literature.

Owing to its practical relevance and inherent hardness,
the research community has developed a great number of classical algorithms for the vehicle routing problem.
As a result, modern algorithms can solve instances with several hundred customers to proven optimality in the order of hours.
For a comprehensive survey, we refer to~\cite{Toth2014}.
Noteworthy, several state-of-the-art exact algorithms rely on branch-price-and-cut~\cite{Pecin2017,Pessoa2019}.
The branch-price-and-cut approach considered in this work was originally introduced in~\cite{Bixby1999}.

\paragraph{Our contribution.}
We are not aware of any work that integrates QA in a branch-price-and-cut algorithm for the CVRP and,
in particular, to use QA as a separation heuristic.
We develop novel QUBO models for both the pricing and separation subproblem.
Our models are {often} small and sparse, which renders them well-suited for QA.
As a result, our approach solves CVRP instances which are larger than the capabilities of existing quantum hardware.
We provide an experimental study on non-trivial instances from literature and from a real-world application.
Therein, we compare QA to simulated annealing and to established algorithms for pricing and separation.
Our results show that both pricing and separation can be suitable problems for quantum heuristics in future.

\paragraph{Structure.}
The remainder of this work is organized as follows.
Section~\ref{sec:qvrp_prelim} introduces the prerequisites necessary for the upcoming sections.
In Section~\ref{sec:qvrp_algo}, we introduce the branch-price-and-cut algorithm for the capacitated vehicle routing problem.
Section~\ref{sec:qvrp_qubos} develops the QUBO models for the pricing and separation subproblems.
In Section~\ref{sec:qvrp_exp}, we evaluate the proposed methods on actual quantum hardware.
Finally, we end with a conclusion and discuss open questions for future research in Section~\ref{sec:qvrp_concl}.

\section{Preliminaries}\label{sec:qvrp_prelim}
Before stating our main results, we introduce quadratic unconstrained binary optimization, quantum annealing
and branch-price-and-cut.
Furthermore, we formally define the capacitated vehicle routing problem as a combinatorial optimization problem.

\paragraph{Quadratic unconstrained binary optimization.}
Most implementations of quantum-annealing or QAOA require the problem of interest to be modeled as a quadratic unconstrained binary optimization (QUBO),
which are problems of the form
\begin{align}\label{eq:qubo}
	\min_{x\in \{0,1\}^N} x^TQx,
\end{align}
where $Q \in \R^{N,N}$ is a real-valued matrix.
Without loss of generality, we assume $Q$ to be upper-triangular throughout this work.
We note that $x^2=x$ if $x\in \{0,1\}$.
Thus, the diagonal of $Q$ implicitly encodes linear terms.
We call terms of the form $x_i x_j$ with $i\neq j$ \textit{quadratic} terms although, strictly speaking, they are bilinear.
For a given upper-triangular matrix $Q$,
we define its \textit{density} as
the fraction of non-zero off-diagonal elements
\begin{align}\label{eq:def_density}
	\rho(Q)\coloneqq \frac{\sum_{i<j}\mathbbm{1}({q_{ij}\neq0})}{N(N-1)/2}\in [0,1]\,.
\end{align}
Being equivalent to the maximum cut problem, QUBO is NP-hard~\cite{Hammer_1965}.
Thus, polynomial-time algorithms exist for special cases only~\cite{Groetschel_1984,Hadlock_1975,Liers_2012}.
QUBO has been extensively studied in literature, for
comprehensive books we refer to~\cite{Punnen_2022,Deza1997}.

Most optimization problems include constraints, which is also the case
in this work.
QUBO with additional constraints can still be solved directly by classical branch-and-cut algorithms~\cite{Buchheim_2010}.
{Although quantum algorithms exist which can handle constraints~\cite{Hadfield_2019,gilliam2021grover},
practical implementations are mostly limited to pure QUBOs.}
To transform a constrained problem into an equivalent QUBO problem, all constraints are incorporated into the cost function $C(x)\coloneqq x^TQx$ via penalty terms.
In general, such a penalized QUBO cost function is of the form
\begin{align}\label{eq:qubo_costfunction}
	C(x) = C_1(x) + P \cdot C_2(x)\,.
\end{align}
Here, $C_1(x)$ is a quadratic function that, if $x\in \{0,1\}^N$ satisfies all constraints, models the costs of the given constrained problem.
$P>0$ is a penalty factor and $C_2(x)$ is a \blue{non-negative} quadratic function which penalizes infeasibility.
It is constructed such that
$C_2(x)=0$ if and only if $x$ encodes a feasible solution, that is, $x$ satisfies all constraints.
Hence, for a sufficiently large value of $P$, an optimal solution to the QUBO defined by~\eqref{eq:qubo_costfunction} is feasible and thus optimal to the constrained problem.

\paragraph{Quantum annealing.}
Quantum annealing is a heuristic algorithm for optimization problems.
It is typically implemented on analog quantum computers built exclusively for this purpose.
This is in contrast to universal, gate-based quantum computers,
which are built to execute arbitrary quantum algorithms.
{The quantum annealer used in this work has} roughly 50 times more qubits than gate-based quantum devices~\cite{Dwave_advantage,IBMQuantum}.
Still, even with over 5000 physical qubits, their low connectivity {usually} requires an embedding of each logical QUBO variable into several physical qubits,
a process called \emph{minor embedding}~\cite{Choi2008,Choi2010}.
This strongly limits the size of solvable problems if the QUBO matrix is dense.
For example, the quantum device used in our experiments has $5760$ qubits.
However, for a dense QUBO matrix, the problem size is limited to $N\leq 169$ variables~\cite{Dwave_advantage}.
It is crucial keep the number of required qubits as small as possible
since the performance of QA hardware decreases strongly with the number of used qubits~\cite{McGeoch2023}.
From a modeling perspective, this requirement translates to keeping both the model dimension $N$ and the model density $\rho$ as low as possible.

The anticipated advantage of quantum annealing is that it quickly generates many high-quality solutions.
Typically, several thousand solutions are retrieved in less than one second~\cite{Tassef2022}.
{If the practical increase in runtime will not be too high for larger problems},
quantum annealing might outperform classical heuristics for large problems at some point in future.
We however note that to date, classical heuristics still outperform quantum annealing, which is also confirmed by our experiments~\cite{bochkarev2024,JuengerLobe_2021}.

When applying quantum annealing directly to an application problem, usually all generated solutions except for the best one are discarded.
The main idea of this work is to apply quantum annealing to problems for which more solutions than only the best one are valuable.
To this end, we consider two subproblems in a classical decomposition algorithm for vehicle routing and solve them via quantum annealing.
The overall algorithm benefits from all subproblem solutions below a certain cost threshold.
Moreover, the decomposition algorithm solves the input problem to proven optimality,
which is not {the} case when using QA as a standalone algorithm.

\paragraph{Branch-price-and-cut.}
We consider the established decomposition algorithm \emph{branch-price-and-cut}, which we briefly explain here.
For a comprehensive introduction, we refer to~\cite{Desrosiers2011}.
Branch-price-and-cut is a technique for solving (mixed-)integer linear programs (IPs),
which are problems of the form
\begin{subequations}\label{eq:mip_generic}
	\begin{alignat}{6}
		& \min_{x}           & c^Tx        &             &  & \label {eq:mip_generic_objective}                         \\
		& \mathrm{s.t.}\quad & Ax & \geq b      &  & \label{eq:mip_generic_con}           \\
		&                    & x                                    & \in \Z^p\times\R^q &  & \label{eq:mip_generic_int}
	\end{alignat}
\end{subequations}
where $c \in \R^n$, $n \in \Z_{>0}$, $A \in \R^{m\times n}$, $m \in \Z_{>0}$, $b \in \R^m$ and $p,q \in \Z_{\geq0}$ such that $p+q = n$. \noeqref{eq:mip_generic_objective,eq:mip_generic_con,eq:mip_generic_int}
Branch-price-and-cut is tailored to IPs with a (exponentially) large number of variables $n$ and a (polynomially) small number of constrains $m$.
It is based on column generation,
which in turn is a technique for solving the linear programming (LP) relaxation of~\eqref{eq:mip_generic},
\begin{subequations}\label{eq:lp_generic}
	\begin{alignat}{6}
		& \min_{x}           & c^Tx        &             &  & \label {eq:lp_generic_objective}                         \\
		& \mathrm{s.t.}\quad & Ax & \geq b      &  & \label{eq:lp_generic_con}           \\
		&                    & x                                    & \in \R^n &  & .\label{eq:lp_generic_int}
	\end{alignat}
\end{subequations}
The optimum value of~\eqref{eq:lp_generic} is a lower bound on the optimum value of~\eqref{eq:mip_generic} since the solution set of~\eqref{eq:lp_generic} is a superset of the solution set of~\eqref{eq:mip_generic}.
Column generation only considers a subset of variables and solves the LP~\eqref{eq:lp_generic} over this restricted variable set~\cite{Desrosiers2005,Luebbecke2005}.\noeqref{eq:lp_generic_objective,eq:lp_generic_con,eq:lp_generic_int}
Here, the rationale is that there exists an optimum solution in which the number of non-zero variables is bounded by the column rank of the constraint matrix, which, in turn, is bounded by the number of constraints $m$.
Thus, if the number of constraints is much smaller than the number of variables, there exist an optimum solution in which almost all variables attain the value zero.
After solving the restricted LP, a subproblem is invoked which checks if additional variables exist which could improve the solution and should thus be included in the restricted LP.
If no such variables exist, the optimum solution of the restricted LP is indeed optimal for the original LP.
This subproblem is known as \emph{pricing}.

Column generation can be extended to IPs.
To this end, it is integrated in a branch-and-bound algorithm, a technique known as branch-and-price~\cite{Barnhart98,Luebbecke2005}.
Most simplified, branch-and-bound consists of two routines.
The first routine computes lower and upper bounds on the value of an optimal solution in a given subset of solutions to the IP.
In this work, this is done by solving linear relaxations via column generation.
The second routine partitions the IP solution set into disjoint subsets, called branches.
For example, we can define two branches by fixing a binary variable to $0$ and $1$, respectively.
In the beginning, branch-and-bound calculates a lower and upper bound for the entire solution set.
If there is a gap between upper and lower bound, the algorithm branches.
For each branch, we again compute a lower and upper bound, keeping track of a globally valid lower and upper bound.
As long as there is a gap between the global upper and lower bound, the algorithm continues to branch, building up a tree as illustrated in Fig.~\ref{fig:concept_bpc} (left).
Importantly, if the lower bound in a branch exceeds the global upper bound, the branch can be excluded from the search.
Thereby, we try to keep the width of the search tree tractable, which can be exponentially large in the worst case.
\begin{figure}
	\centering
    \includegraphics[height=6cm]{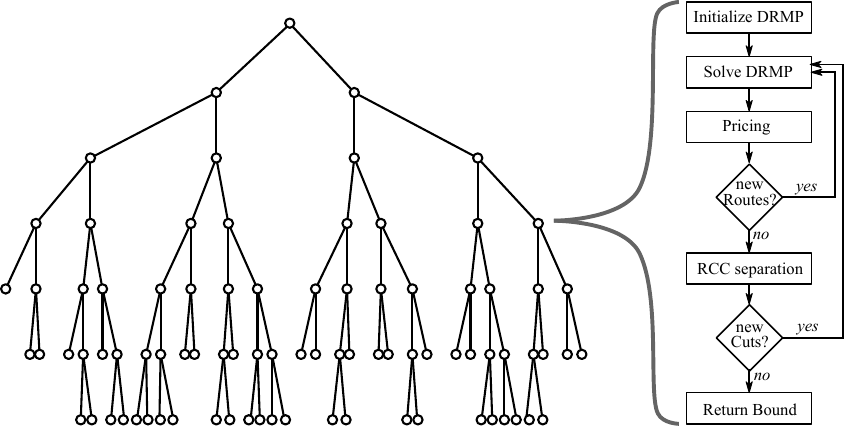}
    \caption{Schematic workflow of branch-price-and-cut. The algorithm builds a branch-and-bound tree (left).
	Each node represents a subset of solutions, where the root node corresponds to the entire solution set.
	The two children of a node divide its solution subset further into two disjoint subsets.
	For each node, we calculate a lower bound via column generation and cutting plane separation (right).
	As long as global upper and lower bound differ, the algorithm continues to branch (edges in the tree).
	If the local lower bound in a node exceeds the global upper bound, the node can be excluded from the search (nodes without children).
	We propose to use quantum annealing as a heuristic for pricing and separation.	\label{fig:concept_bpc}}
\end{figure}

Thus, the runtime of a branch-and-bound algorithm heavily depends on the tightness of the employed bounding method.
An established method to strengthen the linear programming bound is to add additional inequalities to the LP
which are valid for the integer case but cut off fractional points in the linear relaxation.
Such inequalities are known as \emph{cutting planes}.
The problem of constructing a cutting plane which cuts off a given fractional solution is known as the \emph{separation problem}.
In general, the separation problem can be NP-hard, which is also the
case for the class of cutting planes considered here.
Thus it is typically solved via heuristics.
Integrating cutting plane separation in a branch-and-price algorithm is called
branch-price-and-cut~\cite{Barnhart2000,Juenger2000,Elf2001}.
Details of our branch-price-and-cut algorithm will be explained in Section~\ref{sec:qvrp_algo}.

Here, we propose to use quantum annealing as both a pricing heuristic and a separation heuristic.
As an example problem, we consider the CVRP.
The motivation is that state-of-the-art exact algorithms for the CVRP
are often branch-price-and-cut approaches~\cite{Toth2014,Pecin2017,Pessoa2019}.

\paragraph{The vehicle routing problem.}
The vehicle routing problem is an archetypical combinatorial optimization problem,
originally introduced by Dantzig and Ramser in 1959~\cite{Dantzig1959}.
Generalizing the prominent NP-hard travelling salesperson problem, it developed to one of the most studied combinatorial optimization problems due to both its practical relevance and inherent difficulty.
Many variants are studied in literature, for a comprehensive survey we refer to~\cite{Toth2002}.
Here, we focus on a basic variant, the {capacitated vehicle routing problem}.
However, the methods we develop in this work can readily be applied to other VRP variants as well.

Informally speaking, in the CVRP we are given a set of customers as well as a number of trucks located at a depot.
Each customer has an individual demand of goods, while the trucks are identical and have a limited capacity.
Now, the task is to assign a route to each truck such that each customer is visited at least by one truck and each route does not exceed the truck capacity.
More formally, the CVRP is defined as the following combinatorial optimization problem.
\begin{definition}[CVRP]\label{def:cvrp}
	An instance of the capacitated vehicle routing problem is given by a quadruple $(V^0,t,D,K)$ where
	\begin{itemize}
		\item $V^0 = \set{0,1\dots n}$ is an index set with $0$ denoting a depot and $\set{1\dots n}\eqqcolon V$ denoting a set of customers,
		\item $t:V^0 \times V^0\rightarrow \R_0^+$ is a metric satisfying the triangle-inequality,
		\item $D=\set{d_{{v}} \mid {{v}} \in V}$ are positive demands of customers and
		\item $K>0$ is the vehicle capacity.
	\end{itemize}
	We define a route $r = (v_1,v_2,\dots,v_{|r|})$, $v_i \in V$, as a sequence of customers.
	A feasible solution is given by a set of routes $R=\{r_1,r_2,\dots,r_{|R|}\}$
	such that each route does not exceed the capacity $K$
	and each customer is contained in at least one route.
	The cost of a feasible solution $R$ is given by its total length
	\[
	c = \sum_{r\in R} c_r
	\]
	where
	\[
	c_r = t(0,v_1)+ \sum_{i=1}^{|r|-1}t(v_i,v_{i+1}) + t(v_{|r|},0)
	\]
	is the length of route $r\in R$.
	The goal is to minimize the total length.
\end{definition}
{In this work, we restrict the distances, demands and the capacity to integer values, which is common in the CVRP literature~\cite{Uchoa2017} \blue{and required for the QUBO transformation}.}
Having introduced the necessary prerequisites, we now detail the proposed branch-price-and-cut algorithm.

\section{Branch-Price-and-Cut Algorithm}\label{sec:qvrp_algo}
Our work builds upon the branch-price-and-cut algorithm originally introduced in~\cite{Bixby1999}.
We remark that more advanced branch-price-and-cut algorithms exist~\cite{Pecin2017,Pessoa2019}.
However, we do not aim at improving the state-of-the-art for exact vehicle routing solvers but at showcasing how quantum heuristics can be beneficial for exact integer optimization algorithms.
Here, the algorithm of~\cite{Bixby1999} is particularly well-suited to illustrate the integration of quantum subroutines in branch-price-and-cut.
{We emphasize that we do not implement the entire branch-price-and-cut algorithm but focus on specific parts that are relevant for the proposed quantum subroutines.}
{The algorithm works on} an exponentially large formulation of the CVRP,
known as the \emph{set cover model}, see e.g.~\cite{Toth2002}.

\paragraph{Set cover model.}
Let $(V^0,t,D,K)$ be a CVRP instance.
Denote by $\mathcal{R}$ the set of all routes with a demand not exceeding $K$.
For a customer ${v}\in V$ and a route $r \in \mathcal{R}$,
we set $\alpha_{{v}r} = 1$ if $i\in r$, otherwise we set $\alpha_{{v}r} = 0$.
With this notion, the CVRP can be modeled by
\begin{varsubequations}{\blue{CVRP}}\label{eq:mip}
	\begin{alignat}{6}
		 & \min_{y}           & \sum_{r \in \mathcal{R}}c_r y_r        &             &  & \label {eq:mip_Objective}                         \\
		 & \mathrm{s.t.}\quad & \sum_{r \in \mathcal{R}}\alpha_{{v}r}y_r & \geq 1      &  & \quad \forall {v} \in V\label{eq:demands}           \\
		 &                    & y_r                                    & \in \{0,1\} &  & \quad \forall r\in\mathcal{R}\ .\label{eq:mip_biny}
	\end{alignat}
\end{varsubequations}
The objective~\eqref{eq:mip_Objective} minimizes the aggregate route length 
while constraints~\eqref{eq:demands} enforce that each customer is visited.\noeqref{eq:mip_Objective,eq:mip_biny}
We remark that it is never favorable to visit a customer more than once since the distances satisfy the triangle inequality.

At its core, branch-price-and-cut solves the continuous relaxation of~\eqref{eq:mip}, called master problem (MP),
\begin{varsubequations}{MP}\label{eq:mp_cvrp}
	\begin{alignat}{6}
		 & \min_{y}           & \sum_{r \in \mathcal{R}}c_r y_r        &           &  & \label {eq:mp_Objective}                         \\
		 & \mathrm{s.t.}\quad & \sum_{r \in \mathcal{R}}\alpha_{{v}r}y_r & \geq 1    &  & \quad \forall {v} \in V\label{eq:mp_demands}       \\
		 &                    & y_r                                    & \in [0,1] &  & \quad \forall r\in\mathcal{R}\ .\label{eq:mp_biny}
	\end{alignat}
\end{varsubequations}
It is well-known (see \eg\cite{Toth2014}) and also confirmed by our experiments, presented in Section~\ref{sec:qvrp_exp},
that the relaxation~\eqref{eq:mp_cvrp} typically yields tight bounds.\noeqref{eq:mp_Objective,eq:mp_demands,eq:mp_biny}
However, it contains an exponentially large number of variables.
Therefore, we solve~\eqref{eq:mp_cvrp} by column generation.
To this end, we consider the \emph{dual} of the master problem~\eqref{eq:mp_cvrp}.
The dual of a linear program is an equivalent linear program which has a variable (constraint) for each constraint (variable) in the original problem.
For an introduction to duality theory, we refer to~\cite{Padberg1999}.
The dual master problem (DMP) reads
\begin{varsubequations}{DMP}\label{eq:cvrp_dmp}
	\begin{alignat}{6}
		 & \max_{\pi}         & \sum_{{v} \in V}\pi_{{v}}           &            &  & \label {eq:obj_dmp}                             \\
		 & \mathrm{s.t.}\quad & \sum_{{v} \in V}\alpha_{{v}r}\pi_i & \leq c_r   &  & \quad \forall r \in \mathcal{R}\label{eq:r_dmp} \\
		 &                    & \pi_{{v}}                       & \in \R_0^+ &  & \quad \forall {v}\in V\ .\label{eq:binpi}
	\end{alignat}
\end{varsubequations}
Therein, $\pi_{{v}}$ is the dual variable associated to constraint~\eqref{eq:mp_demands} for customer ${v}\in V$.\noeqref{eq:obj_dmp,eq:binpi,eq:r_dmp}

\paragraph{The pricing problem.}
We consider an arbitrary subset of routes $\mathcal{R}'\subset \mathcal{R}$ and replace $\mathcal{R}$ by $\mathcal{R}'$ in \blue{~\eqref{eq:mp_biny} and thus also in~\eqref{eq:r_dmp}}.
We denote the resulting problem\blue{s} as \blue{restricted master problem (RMP) and} {dual restricted} master problem ({DRMP}), \blue{respectively}.
Let $\blue{\pi^*}$ be an optimal solution for the {DRMP}.
Now, the pricing problem asks whether $\blue{\pi^*}$ is feasible and thus optimal for the unrestricted \eqref{eq:cvrp_dmp}.
That is, we solve
\begin{align}\label{eq:pricing}
	\bar{c}_{\mathrm{min}} \coloneqq \min_{} \big\{ \underbrace{c_r -\sum_{v\in r}\blue{\pi^*_v} }_{\eqqcolon \bar{c}_r} \mid r\in \mathcal{R}\big\} \tag{P}
\end{align}
and check if $\bar{c}_{\mathrm{min}} \geq 0$.
If so, $\blue{\pi^*}$ is feasible and thus optimal for the unrestricted DMP~\eqref{eq:cvrp_dmp}.
Otherwise, we add a route $r$ with $\bar{c}_{r} < 0$ to the \blue{subset $\mathcal{R'}\subset\mathcal{R}$ in} the {DRMP} and repeat the procedure until $\bar{c}_{\mathrm{min}} \geq 0$.
We refer to the quantity $\bar{c}_{r}$ as the \emph{reduced cost} of a route $r \in \mathcal{R}$.

The pricing problem~\eqref{eq:pricing} is an NP-hard combinatorial optimization problem.
It asks for a route with minimal costs such that the route does not exceed the capacity.
Herein, the costs of a route are given by its length minus the prices of visited customers.
This problem is sometimes called the \emph{capacitated price collecting traveling salesperson problem} (CPCTSP)~\cite{Bixby1999}.
{In contrast to well-known traveling salesperson problem (TSP), CPCTSP has a capacity constraint and customers may be excluded from the route.}
As the name suggests, the CPCTSP can be shown to be NP-hard via a reduction of the TSP.
{To reduce TSP to CPCTSP, one chooses sufficiently large prices such that it is never favorable to exclude a customer.}

Importantly, an optimal solution to the CPCTSP~\eqref{eq:pricing}, which corresponds to a route with minimal reduced cost, does not necessarily lead to the largest cost reduction when added to the {DRMP}.
This is because the reduced cost $\bar{c}_{r}$ of a route $r\in \mathcal{R}$ quantify the cost reduction per unit increase of the associated variable $y_r$ in the MP~\eqref{eq:mp_cvrp}.
However, the maximum possible increase of $y_r$ might be very small.
Thus, it can be beneficial to add multiple routes with negative reduced cost to the {DRMP} in a single pricing step.
Moreover, the pricing problem is an NP-hard combinatorial optimization problem whereas the restricted master problem is usually a small linear program.
Thus, fast {heuristic pricing} can be highly beneficial for the overall runtime compared to exact pricing.
{Here, exact pricing refers to the task of solving~\eqref{eq:pricing} while pricing heuristics try to find a route $r$ with  $\bar{c}_{r}<0$
without any guarantees of success.}
Exact pricing is only required if the pricing heuristic does not return a route with negative reduced cost.
Hence, in the ideal case, we need to exactly solve the pricing problem {only once for solving the \eqref{eq:mp_cvrp}}.
The requirement of generating multiple solutions quickly renders QA a suitable heuristic for the pricing problem.

\paragraph{Rounded capacity cuts.}
Although the bound obtained from the LP-relaxation \eqref{eq:mp_cvrp} is relatively strong,
it is still insufficient for a practicable branch-and-bound approach on larger instances. 
Thus, in state-of-the-art algorithms, additional inequalities are added to the LP model in order to cut off fractional solutions and thereby to strengthen the bound.
Here, we focus on \emph{rounded capacity cuts}, which have proven particular useful in CVRP algorithms~\cite{Lysgaard_2004,Lysgaard_2003,Bixby1999}.
Given a set of customers $S\subseteq V$ with total demand $D(S)\coloneqq \sum_{i\in S} d_i$, the rounded capacity cut (RCC)
\begin{align}\label{eq:rcc}
	\sum_{r:r\cap S\neq \emptyset} y_r \geq \ceil*{\frac{D(S)}{K}}
\end{align}
is valid.
The RCC~\eqref{eq:rcc} states that at least $\ceil*{{D(S)}/{K}}$ vehicles are required to cover the demand in $S$.
The associated separation problems asks whether a \blue{customer subset $S\subseteq V$ exists whose corresponding} RCC~\eqref{eq:rcc} is violated by a given LP solution $y^*$ to the MP~\eqref{eq:mp_cvrp}.
The authors of~\cite{DIARRASSOUBA_2017_2} prove that the RCC separation problem is NP-complete.
Thus, polynomial-time algorithms exist for special cases only~\cite{DIARRASSOUBA_2017}.
Typically, practitioners employ heuristic separation procedures~\cite{Ralphs_2003,Lysgaard_2004,kim_2023_sep}.
{Separation heuristics try to find a violated RCC without any guarantees of success.}
Recently, the authors of~\cite{Pavlikov_2024} developed an exact separation routine based on integer programming.
{Exact separation routines are guaranteed to find a violated RCC if exists.}
A related class of cuts are the \emph{fractional capacity cuts},
\begin{align}\label{eq:fcc}
	\sum_{r:r\cap S\neq \emptyset} y_r \geq \frac{D(S)}{K}\ .
\end{align}
In contrast to RCCs, fractional capacity cuts can be separated in polynomial time~\cite{Toth2002}.

We note that adding RCCs to the master problem~\eqref{eq:mp_cvrp} alters the structure of the subproblem~\eqref{eq:pricing}.
In particular, each RCC~\eqref{eq:rcc}, defined by a customer subset $S\subseteq V$, adds a variable $\beta_S\geq0$ to the dual master problem~\eqref{eq:cvrp_dmp}.
The variable $\beta_S$ has an objective coefficient of $\ceil*{{D(S)}/{K}}$
and a coefficient of $1$ in each constraint whose corresponding route has a non-empty intersection with $S$.
Thus, the modified pricing problem reads
\begin{align}\label{eq:pricing_mod}
	\bar{c}_{\mathrm{min}} \coloneqq \min_{} \{c_r -\sum_{v\in r}\blue{\pi^*_v} -\sum_{S:S\cap r\neq \emptyset}\beta_S \mid r\in \mathcal{R}\}\ 
\end{align}
{where the second sum runs over all included RCCs which have a non-empty intersection with the route.}

\paragraph{Solving the integer program.}
In this work, we focus on solving the master problem~\eqref{eq:mp_cvrp} and the associated pricing and separation problems,
rather than on actual solutions to the CVRP integer program~\eqref{eq:mip}.
However, it is straightforward to extend our solution method for the linear relaxation~\eqref{eq:mp_cvrp} to an exact algorithm for the integer program~\eqref{eq:mip}.
{Although not part of this work, w}e briefly describe how such integer solutions can be obtained in principle.
Using column generation and RCC separation as an algorithm for the linear relaxation~\eqref{eq:mp_cvrp}, one can build a branch-and-bound framework
which solves the integer program to proven optimality.
This approach is called branch-price-and-cut and sketched in Figure~\ref{fig:concept_bpc}.
\blue{Branch-price-and-cut first solves the linear relaxation~\eqref{eq:mp_cvrp} by column generation
and tightens the bound by cutting plane separation.
If the relaxation solution is integral, the algorithm terminates.
Otherwise, two subproblems are created by restricting a variable which attains a fractional value to either 0 or 1.
Then, the linear relaxations of the subproblems are solved and tightened by cutting plane separation.
The process repeats, keeping track of a global upper and lower bound until they match.
Some care must be taken when adding branching rules and cutting planes to the master problem since they alter the structure of the pricing problem~\cite{Bixby1999}}.
We remark that another common heuristic approach is to apply column generation only
in the root node of the branch-and-bound tree.
Then, the restricted IP is tractable for a standard branch-and-cut solver.
This approach is known as price-and-branch and often yields close-to-optimal solutions~\cite{Desrosiers2011}.

\section{QUBO Models for Pricing and Separation}\label{sec:qvrp_qubos}
We require QUBO models for both the pricing problem~\eqref{eq:pricing} and the separation problem~\eqref{eq:rcc}
in order to employ QA as a heuristic for these problem.
We seek for QUBO models of both low dimension and low density.

\paragraph{Pricing model.}
In the following, we develop a QUBO model of the CPCTSP~\eqref{eq:pricing}.
Ref.~\cite{Lucas_2014} proposes a QUBO formulation for the related TSP.
We restate this QUBO model for the TSP and extend it to the CPCTSP.
To this end, we consider a complete graph $K_n=(V,E)$ together with distances $t_{uv}>0$ for $uv \in E$, which we assume to satisfy the triangle inequality.
The TSP asks for a \blue{route} of minimal length visiting all vertices of the graph.
Here, we define a \blue{route} as a sequence of vertices of length $n$ and call the indices in such a sequence \emph{time steps}.
We introduce a binary variable $x_{v,i}$ which
takes a value of~$1$ if and only if the vertex $v \in \{1,\dots,n\}$ is visited in time step $i\in \{1,\dots n\}$ in the \blue{route}.
Then, the TSP can be modeled via the QUBO cost function
\begin{align}
	C^{\mathrm{TSP}}(x)&=\sum_{(u,v)\in V\times V}\left(t_{uv}\sum_{i=1}^{n}x_{u,i}x_{v,i+1}\right)
	+P \sum_{v=1}^{n}\left( \sum_{i=1}^{n} 1-x_{vi}\right)^2
	+ P\sum_{i=1}^{n}\left( \sum_{v=1}^{n} 1-x_{vi}\right)^2 \,. \label{eq:TSP}
\end{align}
In the first sum, we identify $x_{v,n+1}$ with $x_{v,1}$.
The first term models the \blue{route} length.
The last two terms ensure feasibility by enforcing that each vertex is
visited exactly once (second term) and that exactly one vertex is
visited in each time step (third term).
A sufficient condition for $P$ ensuring that all feasible solutions have smaller costs than all infeasible ones is, for example, $P > T_n$ where $T_n$ is the sum of the $n$ largest distances.
Finally, we note that, without loss of generality, one can fix an arbitrary start vertex, reducing the number of variables from $n^2$ to $(n-1)^2$.

We extend Model~\eqref{eq:TSP} to a model for the CPCTSP~\eqref{eq:pricing}.
To this end, we first define some required quantities.
Given a CPCTSP instance $(V^0,t,D,K,\pi)$, defined by a CVRP instance $(V^0,t,D,K)$ and prices $\pi_{{v}}\geq 0$ for all ${{v}} \in V$,
we divide the capacity and all demands by their greatest common divisor $g = \gcd(K,D)$, defining
\begin{align}
	\tilde{K} \coloneqq K/g\ ,\qquad \tilde{D} \coloneqq \set{\blue{\underbrace{ d_{{v}}/g}_{\eqqcolon \tilde{d}_v}}\mid {{v}} \in V}\ .
\end{align}
Clearly, the resulting CPCTSP instance $(V^0,t,\tilde{D},\tilde{K},\pi)$ is equivalent to the original instance $(V^0,t,{D},{K},\pi)$.
{However, down-scaling the parameters reduces the number of required variables in the QUBO model.}
We set $\tilde{d}_{\mathrm{min}}\coloneqq \min (\tilde{D})$.
Additionally, we define $m\leq n$ as the largest number of customers a feasible route can contain.
The value of $m$ can be computed efficiently by simply sorting the demands in {ascending} order
{and cutting of cutting off when the collected demands exceed $K$}.
Finally, we combine prices and distances in the modified distances
\begin{align}
	\tilde{t}_{uv}  \coloneqq
	\begin{cases} 
		t_{uv}-(\blue{\pi^*_u}+\blue{\pi^*_v})/2 &u\neq v\\	
		0 &u = v\ .
	\end{cases}
\end{align}
{The modified distances allow us to model the reduced costs $\bar{c}_r$ of a route $r$ as a function of the route's edges since we have $\sum_{uv\in r}\tilde{t}_{uv}=c_r -\sum_{v\in r}\blue{\pi^*_v}=\bar{c}_r$.}
Note that $\tilde{t}_{uv}$ can attain negative values.
Having defined the preliminaries, we now introduce the following sets of binary variables.
\begin{itemize}
	\item Variable $x_{vj}$ indicates whether the \blue{route} visits vertex (customer or depot) $v\in V^0$ at time step $j \in \{1,\dots, m\}$.
	\item Variable $y_{v}$ indicates whether the \blue{route} visits customer $v\in V$.
	\item The variables $w_k$ for $k\in\{0, \dots, M-1\}$ form a binary encoding of the route's demand \blue{\cite{delagrandrive2019knapsackproblemvariantsqaoa,coffey2017adiabaticquantumcomputingsolution,quintero2021characterizing,Lucas_2014}}.
	Here, we set $M\coloneqq \ceil*{\log_2(\tilde{K}-\tilde{d}_{\mathrm{min}}+1)}$.
	Then, the route has total demand $\tilde{d}_{\mathrm{min}} + \sum_{k=0}^{M-1}2^kw_k$.
	We note that there is some degeneracy, i.e., multiple binary vectors $w$ encode the same demand, if $\tilde{K}-\tilde{d}_{\mathrm{min}}+1$ is not an integer power of 2.
\end{itemize}
With this, we define the QUBO model by
\begin{align}
	C^{\mathrm{CPCTSP}}(x,y,w) \coloneqq C^{\mathrm{CPCTSP}}_1(x,y,w)+P\cdot C^{\mathrm{CPCTSP}}_2(x,y,w)\label{eq:qubo_costfunction_cpctsp}
\end{align}
where
\begin{subequations}\label{eq:qubo_costfunction_cpctsp1}
	\begin{align}
		&C^{\mathrm{CPCTSP}}_1(x,y,w) \coloneqq\\
		&\sum_{(u,v)\in V^0\times V^0}\tilde{t}_{uv}\sum_{j=\blue{1}}^{\blue{m-1}}x_{u,j}x_{v,j+1}\label{eq:qubo_cpctsp_len}  \\
		&+\sum_{v\in V}\tilde{t}_{0v} x_{v,1}\label{eq:qubo_cpctsp_len_first}           \\
		&+\sum_{v\in V}\tilde{t}_{v0} x_{v,\blue{m}}\label{eq:qubo_cpctsp_len_last}     
	\end{align}
\end{subequations}
and
\begin{subequations}\label{eq:qubo_costfunction_cpctsp2}
	\begin{alignat}{2}
		 &C^{\mathrm{CPCTSP}}_2(x,y,w) \coloneqq\\
		 & \sum_{j = 1}^{m}\left(1-\sum_{v \in V^0} x_{vj} \right)^2\label{eq:qubo_cpctsp_time}                   \\
		 &+\sum_{v \in V}\left(y_{v} - \sum_{j = 1}^m x_{vj} \right)^2\label{eq:qubo_cpctsp_node}            \\
		 &+\Bigg(\tilde{d}_{\mathrm{min}} + \sum_{k=0}^{M-2}2^kw_k 
		 + \left(\tilde{K} - \tilde{d}_{\mathrm{min}} - 2^{M-1} + 1\right)w_{M-1} 
		 - \sum_{v\in V} \blue{\tilde{d}}_v y_v \Bigg)^2\label{eq:qubo_cpctsp_cap}\ .
	\end{alignat}
\end{subequations}
The term~\eqref{eq:qubo_cpctsp_len} models the travel costs between consecutive customers.
Terms~\eqref{eq:qubo_cpctsp_len_first} and~\eqref{eq:qubo_cpctsp_len_last}
count the travel costs to the first and from the last customer, respectively.
Penalty term~\eqref{eq:qubo_cpctsp_time} ensures that either exactly one customer or the depot is visited in each time step.
Via penalty term~\eqref{eq:qubo_cpctsp_node}, we enforce that each customer is visited at most once.
Penalty term~\eqref{eq:qubo_cpctsp_cap} is satisfied if the route's demand does not exceed the capacity.
We remark that incorporating the prices in the modified distances $\tilde{t}$ instead of counting them via $y$-variables
bears the advantage that some modified distances can be zero, resulting in fewer quadratic terms.

The total number of variables in Model~\eqref{eq:qubo_costfunction_cpctsp} amounts to \[\mathrm{\#Var}(C^{\mathrm{CPCTSP}})=(n+1)m + n + \ceil*{\log_2(\tilde{K}-\tilde{d}_{\mathrm{min}}+1)}.\]
This value is upper bounded by
\begin{align}\label{eq:nvar_improved}
	\mathrm{\#Var}(	C^{\mathrm{CPCTSP}}) \leq n^2 + 2n + \ceil*{\log_2({{K}})} .
\end{align}
Assuming $M\in O(n)$, the number of non-zero coefficients for the
quadratic monominals in Model~\eqref{eq:qubo_costfunction_cpctsp} is
upper bounded by
\begin{align}\label{eq:nquad_improved}
	\mathrm{\#Quad}(	C^{\mathrm{CPCTSP}})  \leq 3n^3 + O(n^2)\, .
\end{align}
This can be seen as follows. If $M\in O(n)$, the term~\eqref{eq:qubo_cpctsp_cap} contributes at most $O(n^2)$ quadratic monomials.
The dominating contribution then comes from terms~\eqref{eq:qubo_cpctsp_len}, \eqref{eq:qubo_cpctsp_time} and~\eqref{eq:qubo_cpctsp_node}, each of which contains at most $n^3$ quadratic monomials.
{If the number of reachable customers scales as $m\in \Omega(n)$}, we conclude that the density of Model~\eqref{eq:qubo_costfunction_cpctsp} satisfies
\begin{align}\label{eq:dens_improved}
	\rho(C^{\mathrm{CPCTSP}}) \in O(n^{-1})\,,
\end{align}
which goes to zero for large $n$.
It can easily be verified that for the TSP model~\eqref{eq:TSP}, it also holds $\rho(	C^{\mathrm{TSP}}) \in O(n^{-1})$.
Thus, {if $m\in \Omega(n)$} our model is asymptotically as sparse as the TSP model.
Indeed, any model that uses variables $x_{vj}$ to indicate whether vertex $v$ is visited at time step $j$ needs 
at least $O(n^2)$ variables and $O(n^3)$ quadratic monomials to enforce that exactly one vertex is visited at each time step via terms of the form of~\eqref{eq:qubo_cpctsp_time}.
Thus, in this case, a density $O(n^{-1})$ is asymptotically optimal.
{We note that in scenarios where number of reachable customers $m$ does not grow with $n$, the density does not tend to zero.}
\blue{However, in this case, the asymptotic behavior of the density can be still be bounded by
\begin{align}
    \rho(C^{\mathrm{CPCTSP}}) &\in O\left( \frac{n^2m+m^2n+n^2}{n^2m^2} \right) = O\left(m^{-1}+n^{-1}\right)\,.
\end{align}
}
{In Figs.~\ref{fig:v_vars_pricing} and \ref{fig:dens_pricing}, we calculate the actual numbers of variables and the densities
for all instances from the CVRPLIB~\cite{Uchoa2017} with up to 1,000 customers.
\blue{The CVRPLIB is a library of established CVRP benchmark instances.}
We observe that for classically hard instances with several hundreds of customers, the pricing QUBOs have in the order of $10^4$ variables and densities typically well below 0.2.
In contrast, the annealing device used in this work has roughly 5,000 physical qubits and a density of 0.003~\cite{Dwave_advantage}.}
\begin{figure}
    \centering
    \includegraphics[height=4.2cm]{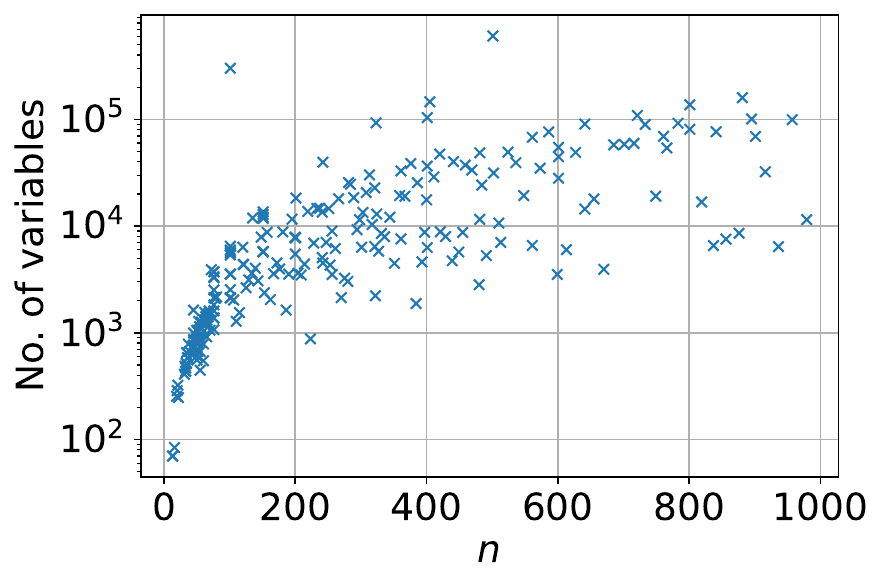}
\caption{Number of variables in the pricing QUBO for all instances from the CVRPLIB~\cite{Uchoa2017} with up to 1,000 customers.
\label{fig:v_vars_pricing}}
\end{figure}
\begin{figure}
    \centering
    \includegraphics[height=4.2cm]{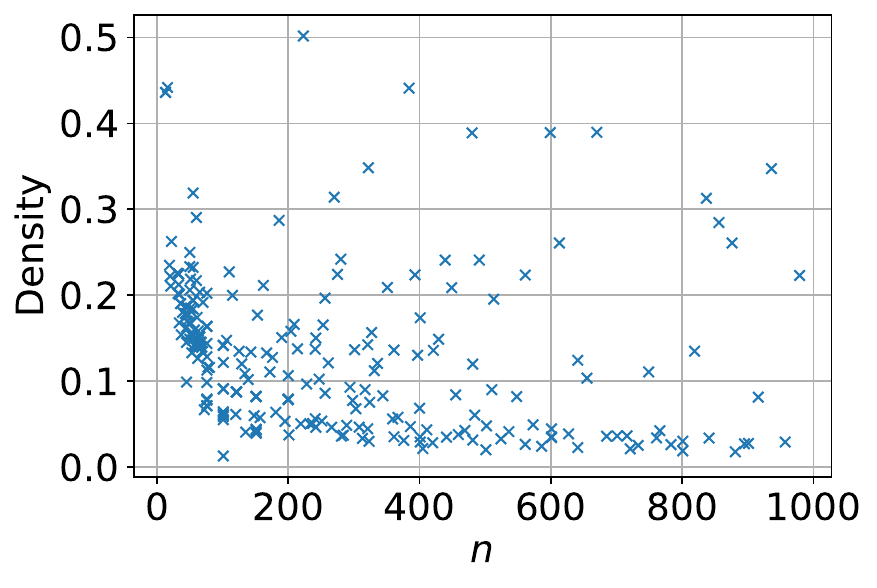}
\caption{Density of the pricing QUBO for all instances from the CVRPLIB~\cite{Uchoa2017} with up to 1,000 customers.
\label{fig:dens_pricing}}
\end{figure}

{We refer to a solution $(x,y,w)$ of Model~\eqref{eq:qubo_costfunction_cpctsp} with $C^{\mathrm{CPCTSP}}_2(x,y,w) > 0$ as \textit{infeasible}.}
In the following, we derive a sufficient
condition for a lower bound on the value of $P$ in Model~\eqref{eq:qubo_costfunction_cpctsp} such that an infeasible solution will always be more costly than a feasible one.
\begin{lemma}\label{lem:penalty_pricing}
Let $(x,y,w)$ be a binary vector such that $C^{\mathrm{CPCTSP}}_2(x,y,w)=0$
and let $(x',y',w')$ be a binary vector such that $C^{\mathrm{CPCTSP}}_2(x',y',w') > 0$.
If
\begin{align}\label{eq:PenaltyRatio}
	P > (n+1) \max_{}\set{\blue{\tilde{t}_{uv}} \mid \blue{uv \in V^0\times V^0}} + \sum_{v \in V}\blue{\pi^*_v},
\end{align}
then $C^{\mathrm{CPCTSP}}(x',y',w')>C^{\mathrm{CPCTSP}}(x,y,w)$.
\end{lemma}
\begin{proof}
For clarity, we omit the superscript $\mathrm{CPCTSP}$ in this proof.
We consider a feasible solution $(x,y,w)$ to \eqref{eq:qubo_costfunction_cpctsp}.
Thus, it holds $ C_2(x,w,y)=0$.
Further, we consider an infeasible solution $(x',y',w')$ to \eqref{eq:qubo_costfunction_cpctsp}.
That is, we have  $ C_2(x',y',w') > 0$.
Denote by $\Delta C \coloneqq  C(x',y',w')-C(x,y,w)$ the difference in costs.
We {require} that $(x',y',w')$ has a larger cost than $(x,y,w)$,
\begin{align}
	\Delta C = \Delta C_1 + P\Delta C_2 > 0\, .
\end{align}
Since  $ C_2(x',y',w') > 0$ and  $ C_2$ only has integer coefficients, we have $\Delta C_2 \geq 1$.
Hence,
\begin{align}
	 \Delta C_1 + P \Delta C_2 \geq \Delta C_1+P\ .
\end{align}
Now, we ask for the minimum value $\Delta C_1$ can attain.
A trivial bound on this is
\begin{align}
	\Delta C_1 \geq -\left((n+1) \max_{}\set{\blue{\tilde{t}_{uv}} \mid \blue{uv \in V^0\times V^0}} + \sum_{v \in V}\blue{\pi^*_v} \right).
\end{align}
The first term estimates the maximum loss in \blue{route} length,
whereas the second term estimates the maximum gain in price collection.
The statement of the lemma follows.
\end{proof}
{We note that, in general, the condition on $P$ in Lemma~\ref{lem:penalty_pricing} is sufficient but not necessary. Thus, it might be possible to improve the bound on $P$.}

\blue{
In our experiments, we compare heuristic pricing via QA and simulated annealing to exact pricing via integer programming.
To this end, we require an IP model for the pricing problem.
We could simply translate the QUBO~\eqref{eq:qubo_costfunction_cpctsp} into an
equivalent integer quadratic program.
Therein, the (quadratic) objective is given by~\eqref{eq:qubo_costfunction_cpctsp1}.
The (linear) constraints are constructed by
setting the square roots of each individual 
quadratic parenthesis in \eqref{eq:qubo_cpctsp_time}, \eqref{eq:qubo_cpctsp_node}, \eqref{eq:qubo_cpctsp_cap} to zero.
Then, the quadratic objective can further be linearized using
auxiliary variables, which most modern branch-and-cut solvers handle automatically~\cite{Gurobi}.}
\blue{However, solving a naively linearized quadratic model by branch-and-cut is usually
slower than solving an IP model tailored for branch-and-cut solvers.
Using a tailored IP model is thus a more
realistic approach than solving the (linearized) QUBO~\eqref{eq:qubo_costfunction_cpctsp} by a branch-and-cut solver.}
\blue{Hence, instead of linearizing QUBO~\eqref{eq:qubo_costfunction_cpctsp}}, we adopt the IP model for the CPCTSP from Ref.~\cite{Bixby1999}, which we restate here for completeness.
For a given CPCTSP instance defined by a CVRP instance $(V^0,t,D,C)$ together with prices $\set{\blue{\pi^*}_{{v}} \mid {{v}} \in V}$,
we denote by $K_{n+1}=(V^0,E)$ the complete graph with vertices $V^0$ and undirected edges $E$.
We introduce a binary variable $x_{{uv}}$ for each edge ${uv} \in E$, indicating whether edge ${uv}$ is in the \blue{route} ($x_{{uv}}=1$) or not ($x_{{uv}}=0$).
Analogously, binary variables $y_{{v}}$ for ${v} \in V^0$ indicate whether vertex ${v}$ is in the \blue{route} ($y_{{v}}=1$) or not ($y_{{v}}=0$).
With this notion, the IP model reads
\begin{subequations}\label{eq:bixby}
	\begin{alignat}{6}                                                              
		 & \min_{x,y}         &  & \sum_{{v} < {u}}t_{{uv}}x_{{uv}}-\sum_{{v} \in V}\pi_{{v}}y_{{v}}              &  & \label {eq:obj_bixby}                                          \\
		 & \mathrm{s.t.}\quad &  & \sum_{{v} \in V}x_{{uv}}                                 = 2y_{{v}}      &  & \quad\forall {u} \in V^0 \label{eq:cycle_bixby}                  \\
		 &                    &  & y_{0}                                                = 1         &  & \label{eq:depot_bixby}                                         \\
		 &                    &  & \sum_{{v} \in V}d_{{v}} y_{{v} }                               \leq K      &  & \label{eq:cap_bixby}                                           \\
		 &                    &  & \sum_{{u} \in S}\sum_{{v} \in V\setminus S}x_{{uv}}        \geq 2y_{{w}}   &  & \quad\forall S \subset V\ \forall {w} \in S\ \label{eq:sec_bixby} \\
		 &                    &  & x_{{uv}}                                               \in \{0,1\} &  & \quad\forall {uv} \in E \label{eq:binx_bixby}                    \\
		 &                    &  & y_{{v}}                                                \in \{0,1\} &  & \quad\forall {v} \in V\ .\label{eq:biny_bixby}
	\end{alignat}                                                                   
\end{subequations}
Constraints~\eqref{eq:cycle_bixby} require that every vertex in the \blue{route} has two adjacent edges while constraint \eqref{eq:depot_bixby} ensures that the depot is in the \blue{route}.\noeqref{eq:obj_bixby,eq:binx_bixby,eq:biny_bixby}
Constraint~\eqref{eq:cap_bixby} guarantees that the \blue{route} does not exceed the capacity.
Constraints~\eqref{eq:sec_bixby} are \emph{subtour elimination constraints} (SECs).
They enforce that the \blue{route} consists of a single cycle rather than a set of several disjoint cycles (subtours).
Clearly, there are exponentially many SECs.
To circumvent enumeration, SECs are typically added via a separation algorithm~\cite{Bixby1999}.
Checking if a violated SEC~\eqref{eq:sec_bixby} exists amounts to
solving at most $|V|$ many minimum-cut problems.
This can be done efficiently by a combinatorial algorithm~\cite{Corman2009}.

\blue{Finally, we remark that, if RCCs~\eqref{eq:rcc} are added to the \eqref{eq:mp_cvrp}, the modified pricing problem~\eqref{eq:pricing_mod} needs to be modeled as a QUBO.
Although not part of this work, we briefly sketch how this can be achieved.
For each RCC~\eqref{eq:rcc} defined by a customer subset $S\subseteq V$, we
add a binary variable $z_S$ to the pricing QUBO~\eqref{eq:qubo_costfunction_cpctsp}.
The variable indicates if the route intersects with $S$ or not.
This can be enforced via a penalty term of the form $\sum_{v\in S}z_S(1-y_v)$.
The coefficient of $y_S$ in $C_1^{\mathrm{CPCSTSP}}$ is $-\beta_S$.}

\paragraph{Separation model.}
Apart from generating columns via QA, we also aim at separating RCCs \blue{\eqref{eq:rcc}} via QA.
To this end, we model the separation problem for fractional capacity cuts~\eqref{eq:fcc}, i.e.,
\begin{align}\label{eq:fcc_sep}
	\min_{S\subseteq V} \sum_{r:r\cap S\neq \emptyset}y^*_r-\frac{D(S)}{{K}}\,,\tag{\blue{FCC}}
\end{align}
as a QUBO problem.
\blue{Recall that $y^*$ is an optimum solution to the~\eqref{eq:mp_cvrp}.}
Then, for all solutions $S\subseteq V$ returned by the QA algorithm, we check if the corresponding RCC~\eqref{eq:rcc} is violated.
{This procedure allows a more compact QUBO model than directly modeling the RCC separation problem~\eqref{eq:rcc} as a QUBO\blue{, which would require additional auxiliary variables to model the rounding function}.}
For the QUBO model, we introduce two sets of binary variables.
\begin{itemize}
	\item $x_{{v}}\in \{0,1\}$ encodes ${{v}}\in S$ for all ${{v}}\in V$.
	\item $w_r$ encodes $r\cap S \neq \emptyset$ for all $r \in r \in \blue{\mathcal{R}^*}$
\end{itemize}
\blue{where $\mathcal{R}^*\coloneqq \{r \in {\mathcal{R}} \mid y^*_r > 0\}$.}
With this, we define the QUBO model by
\begin{subequations}\label{eq:rcc_qubo}
	\begin{align}
		C^\mathrm{RCC}(x,w) & \coloneqq
		C^\mathrm{RCC}_1(x,w) + P \cdot C^\mathrm{RCC}_2(x,w)\\&=
		\sum_{r \in \blue{\mathcal{R}^*}}  y^*_r w_r - \frac{1}{{K}}\sum_{{{v}} \in V} d_{{v}} x_{{v}}\label{eq:qubo_rcc_cost}\\
		&+P\sum_{r \in \blue{\mathcal{R}^*}} \sum_{{{v}}\in r} x_{{v}} - w_r x_{{v}}\label{eq:qubo_rcc_feas}\ .
	\end{align}
\end{subequations}
The (linear) cost term~\eqref{eq:qubo_rcc_cost} minimizes~\eqref{eq:fcc_sep}.
The (quadratic) penalty term~\eqref{eq:qubo_rcc_feas} ensures that if
$x_i=1$, then $w_r = 1$ for all $r\in R$ and $i\in r$.
The number of variables in Model~\eqref{eq:rcc_qubo} amounts to
\begin{align}\label{eq:nvar_rcc}
	\mathrm{\#Var}(	C^{\mathrm{RCC}})=n+|\blue{\mathcal{R}^*}| \leq 2n\ .
\end{align}
{In the last inequality we have used the fact that in any basis solution of the simplex algorithm, the number of non-zeros is bounded by the number of constraints.}
The number of non-zero quadratic coefficients
in Model~\eqref{eq:rcc_qubo} is bounded by
\begin{align}\label{eq:nquad_rcc}
	\mathrm{\#Quad}(	C^{\mathrm{RCC}}) \leq nm\, .
\end{align}
{In general we have $m\in O(n)$ and thus} the density of Model~\eqref{eq:rcc_qubo} satisfies
\(
	\rho(	C^{\mathrm{RCC}}) \in O(1)\,,
\)
which
{does not allow a conclusion about the sparsity in practice}.
{However, in scenarios where $m\in O(1)$, we have $\rho(	C^{\mathrm{RCC}}) \in O(n^{-1})$}.
{Moreover}, the number of variables is considerably smaller than in the pricing model.
{We note that, in contrast to the pricing problem QUBO~\eqref{eq:qubo_costfunction_cpctsp}, the number of variables and the non-zero quadratic coefficients in the RCC QUBO model~\eqref{eq:rcc_qubo} are not fully determined by the CVRP instance but also depend on the current fractional solution to the~\eqref{eq:mp_cvrp}.}

We now derive a sufficient condition for the value of the penalty factor $P$ which ensures
that an optimum solution to Model~\eqref{eq:rcc_qubo} is indeed an optimum solution to the separation problem~\eqref{eq:fcc_sep}.
\begin{lemma}
	If $P>1$,
	then
    {for each binary vector $(x,w)$ with $C^\mathrm{RCC}_2(x,w)>0$}
    there exists {another} binary vector $(x',w')$ such that $C^\mathrm{RCC}(x',w')<C^\mathrm{RCC}(x,w)$.
\end{lemma}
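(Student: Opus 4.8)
The plan is to prove the statement constructively: starting from any infeasible $(x,w)$ with $C^\mathrm{RCC}_2(x,w)>0$, I would exhibit a single-coordinate modification that strictly decreases the objective, which is exactly what the lemma claims exists. First I would rewrite the penalty as $C^\mathrm{RCC}_2(x,w)=\sum_{r}\sum_{i\in r}x_i(1-w_r)$, where the outer sum ranges over routes $r$ with $y^*_r>0$. All summands are nonnegative, so $C^\mathrm{RCC}_2(x,w)>0$ forces the existence of a route $r_0$ and a customer $i_0\in r_0$ with $x_{i_0}=1$ and $w_{r_0}=0$; this is the violated constraint I will repair.

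Next I would set $x'=x$, leave every coordinate of $w$ unchanged except $w'_{r_0}=1$, and track the resulting change in the objective. Since $w_{r_0}$ enters $C^\mathrm{RCC}$ only through the single cost term $y^*_{r_0}w_{r_0}$ in~\eqref{eq:qubo_rcc_cost} and through the $r_0$-block of the penalty~\eqref{eq:qubo_rcc_feas}, the two contributions decouple. The cost part rises by $\Delta C_1=y^*_{r_0}$, and because $y^*$ is feasible for the master relaxation~\eqref{eq:mp_cvrp} we have $y^*_{r_0}\le 1$. The penalty part falls by $\Delta C_2=-\sum_{i\in r_0}x_i\le -1$, since at least $i_0$ contributes a unit. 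Flipping $w_{r_0}$ from $0$ to $1$ can only shrink penalty terms (the factor $1-w_r$ never grows), so no compensating increase appears elsewhere.

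Finally I would combine the two pieces into
\begin{align}
	C^\mathrm{RCC}(x',w')-C^\mathrm{RCC}(x,w)=\Delta C_1+P\,\Delta C_2\le 1-P\,,
\end{align}
which is strictly negative exactly when $P>1$, completing the argument.

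There is no real difficulty in the calculation; the one step that carries the whole proof is the bound $y^*_r\le 1$. This is precisely where the upper bound $y_r\le 1$ in the master problem~\eqref{eq:mp_cvrp} is used: it caps the cost saving from leaving a route ``uncovered'' (that is, $w_r=0$) at a single unit, so any penalty weight above one makes restoring feasibility strictly cheaper. I would flag that this matches the threshold $P>1$ in the statement and leaves no gap.
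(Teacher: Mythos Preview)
Your proposal is correct and follows essentially the same approach as the paper: identify a route $r_0$ with $x_{i_0}=1$, $w_{r_0}=0$, flip $w_{r_0}$ to $1$, and use $y^*_{r_0}\le 1$ to bound the cost increase against a penalty decrease of at least $P$. The paper's proof is simply a terser version of exactly this argument.
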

\begin{proof}
	Since $C^\mathrm{RCC}_2(x,w)>0$, there exists  $r \in R$ and $i \in r$
	such that $x_i = 1$ and $w_r = 0$.
	We derive $(x',w')$ from $(x,w)$ by changing the value of $w_r$ from $0$ to $1$.
	Using $y^*_r \leq 1$, the statement follows.
\end{proof}
Above lemma implies that if $P> 1$, an optimum solution to~\eqref{eq:rcc_qubo} encodes an optimum solution to \eqref{eq:fcc_sep}.
{Based on practical experience, we use $P=2$ in our experiments.}

\section{Experimental Results}\label{sec:qvrp_exp}
In this section, we  evaluate the developed QUBO models for pricing and separation on actual quantum hardware.
\blue{Code and data is available at~\cite{qvrp}.}
Our goal is to identify acceleration potential for branch-price-and-cut when quantum technology advances.
In literature, the simulated annealing (SA) meta-heuristic is often employed as a benchmark for quantum annealing.
The reason is that QA can be viewed as a quantum version of SA.
Also in this work, we compare QA to SA.
However, we additionally evaluate classical algorithms, specifically tailored to the pricing and separation problem.
This resembles a more realistic classical scenario, where problem-specific algorithms often outperform meta-heuristics.

The quantum annealing hardware in our experiments is a Dwave advantage {system 4.1} processor~\cite{Dwave_advantage}
accessed via the {Python interface \textit{dwave-system}, version 1.28~\cite{DwaveOcean}}.
{The working graph of the processor has a Pegasus topology and its density amounts to roughly 0.003~\cite{Dwave_advantage}.}
{For QUBO modeling, we use the Python package \textit{dimod}, version 0.12.}
Based on practical experience, we set the annealing time to {5}~µs and the sample size to $5,000$.
{Compared to the default annealing time of $20$~µs, we observed a higher solution quality on average.}
\blue{We evaluate each of the $5,000$ samples individually.}
{All remaining settings were left to default.}
{In particular, we use the default embedding algorithm \texttt{minorminer}~\cite{cai2014practical}.}
{However, we re-use the computed embeddings for pricing QUBOs in successive iterations on the same CVRP instance.}
{This is possible since the connectivity of the pricing QUBOs only depends on the given \blue{original} CVRP instance.}
\blue{The default settings calculate the QA chain strength by the method \texttt{uniform\_torque\_compensation} and the unembeddding is done by majority vote~\cite{DwaveOcean}.}
The classical hardware is a standard laptop.
We employ an open-source SA algorithm~\cite{DwaveOcean},
which we run with default settings.
To ensure a fair comparison, we use the same QUBO model and sample size as in QA.
For solving LPs and IPs, we use the branch-and-cut solver Gurobi~\cite{Gurobi}
{ version 12 with default settings, accessed via its Python interface.}

We consider several CVRP instances from literature, available in the \emph{CVRPLIB}~\cite{Uchoa2017} \blue{and the TSPLIB95~\cite{reinelt1995tsplib95} (instance \qm{eil7})}.
{In particular, we select the 11 smallest instances due to hardware limitations.}
Additionally, we study instances arising from a real-world application.
In this application, vehicles correspond to waste-collecting trucks, while customers and demands correspond to waste containers of different fill levels.
{The instances were obtained from a distance matrix containing real distances between 29 containers. To generate a CVRP instance, we choose a subset of the containers, apply unit demands and truck capacities between 5 and 7.
We consider these additional instances to have more instances which are small enough for execution on quantum hardware.}
The names of the waste-collection instances start with a \qm{W}.
Table~\ref{tab:instances} summarizes the instance data.
\begin{table}[t]
	 \centerline{
		\begin{tabular}{lrrrrrrr}
			\toprule
			Instance  & $n$ & $c^*$ & $c^*_{\mathrm{MP}}$ & \multicolumn{2}{c}{\blue{$\mathrm{\#Vars}$}} & \multicolumn{2}{c}{$\mathrm{\rho}$}\\
            \cmidrule(lr){5-6}\cmidrule(lr){7-8}
            &&&&\eqref{eq:qubo_costfunction_cpctsp}&\eqref{eq:rcc_qubo}&\eqref{eq:qubo_costfunction_cpctsp}&\eqref{eq:rcc_qubo}\\
			\midrule
            eil7 & 7 & 114 & 114.0 &      28&-&0.53 & - \\  
			E-n13-k4 & 13 & 290 & 264.0   &  70&21.7& 0.44 &  0.15\\ 
			W-n15-k2 & 15 & 76 & 76.0  &     122&-&0.34 &  -\\ 
			P-n16-k8 & 16 & 450 & 441.0   &  84&26.3& 0.087&  0.066\\ 
			W-n18-k3 & 18 & 108 & 99.27   &  128 &24.3&0.37 & 0.13\\ 
			P-n19-k2 & 19 & 212 & 204.29  &  -& 23.5& -&  0.14\\  
			W-n20-k4 & 20 & 110 & 104.0   &  142 &26.7&0.37 & 0.12\\  
			P-n20-k2 & 20 & 216 & 212.0   &  -&23.0& -&  0.15\\  
			P-n22-k8 & 22 & 603 & 589.67  &  158&31.0& 0.23 &  0.051\\ 
			E-n22-k4 & 22 & 375 & 373.71  &  -&29.0& - &  0.092\\ 
			P-n23-k8 & 23 & 529 & 521.54  &  165&35.3& 0.22 &  0.056\\  
			W-n28-k6 & 28 & 156 & 154.3   &  169&36.5& 0.40 &   0.061\\  
			E-n30-k3 & 30 & 534 & 484.1   &  -&40.0& - & 0.11 \\ 
			E-n31-k7 & 31 & 1815 & 1188.13&  -&58.0& - &  0.089 \\ 
			A-n32-k5 & 32 & 784 & 758.43  &  -&49.7&-  & 0.11 \\ 
			\bottomrule
		\end{tabular}
				\caption{Instance data.
			In column \qm{Instance}, we give the instance name.
			The column \qm{$n$} reports the number of customers.
			In columns $c^*$ and $c^*_{\mathrm{MP}}$, we give the value of an optimum CVRP solution and
			the optimum value of the linear relaxation~\eqref{eq:mp_cvrp}, respectively.
            Columns \blue{$\mathrm{\#Vars}$ and} ${\rho}$ report the (average) \blue{dimension} and density of the pricing \blue{\eqref{eq:qubo_costfunction_cpctsp}} and separation \blue{\eqref{eq:rcc_qubo}} QUBOs, respectively.
                        \blue{Missing variable numbers and densities are due to failed embeddings (pricing) or a vanishing integrality gap (separation).}
}
		
		\label{tab:instances}	
        }
\end{table}

\paragraph{Pricing.}
First, we compare {heuristic} pricing by QA to {heuristic pricing by} SA and to a classical exact pricing algorithm.
Fig.~\ref{fig:concept_pricing} visualizes our evaluation workflow.
We first initialize the~\eqref{eq:cvrp_dmp} by all $n$ routes visiting exactly one customer.
Then, we iteratively solve the~\eqref{eq:cvrp_dmp} until the considered pricing heuristic, either SA or QA, does not return any routes with negative reduced costs.
In this case, the pricing heuristic is \blue{not invoked again but} replaced by {the} exact pricing algorithm based on integer programming (IP).
{For each pricing method, we add all routes with negative reduced cost to the~\eqref{eq:cvrp_dmp}.}
The iteration between master and subproblem continues until no further columns with negative reduced costs exist.
At this point, the master problem~\eqref{eq:mp_cvrp} is solved to optimality.
We remark that using heuristics for pricing and adding multiple variables with negative reduced costs per pricing step are common techniques in modern branch-price-and-cut implementations~\cite{Pecin2017,Pessoa2019}. Nonetheless, exact pricing is required at least once in order to verify that no further variables with negative reduced costs exist.
For comparison, we additionally solve each instance solely by {the} exact IP pricing {algorithm employed when the heuristic failed}.
This allows us to quantify the effectiveness of the pricing heuristic by measuring the reduction in exact pricing calls.
The limitations imposed by the quantum hardware restricts the instance size in our pricing experiments to a maximum of $28$ customers.
{For larger instances, the default embedding strategy failed.}
{Moreover, embedding failed for the instances P-n19-k2, P-n20-k2 and E-n22-k4.}

We note that the considered instances are small compared to the capabilities of classical exact solvers and can be solved to optimality in the order of seconds.

\begin{figure}
    \centering
    \includegraphics[height=6cm]{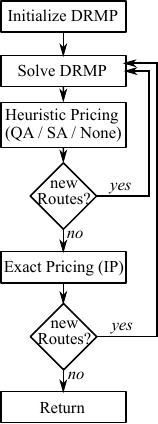}
    \caption{Schematic algorithm workflow for benchmarking pricing heuristics.
We compare exclusively exact pricing via integer programming (IP) to simulated annealing (SA) and quantum annealing (QA).
\label{fig:concept_pricing}}
\end{figure}
\begin{figure}
    \centering
    \includegraphics[height=4.2cm]{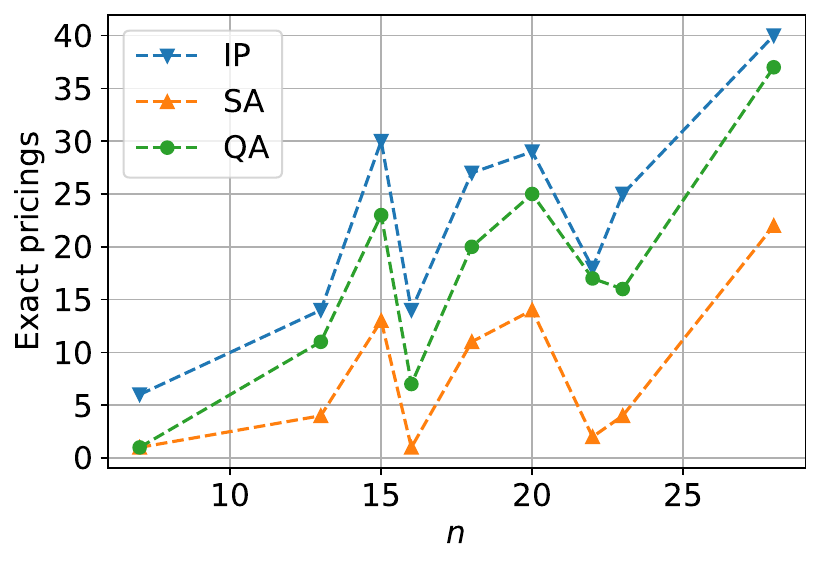}
    \caption{Number of expensive exact pricings in dependence of the instance size $n$ (lower is better).
\label{fig:res_pricing_ip}}
\end{figure}
\begin{figure}
    \centering
    \includegraphics[height=4.2cm]{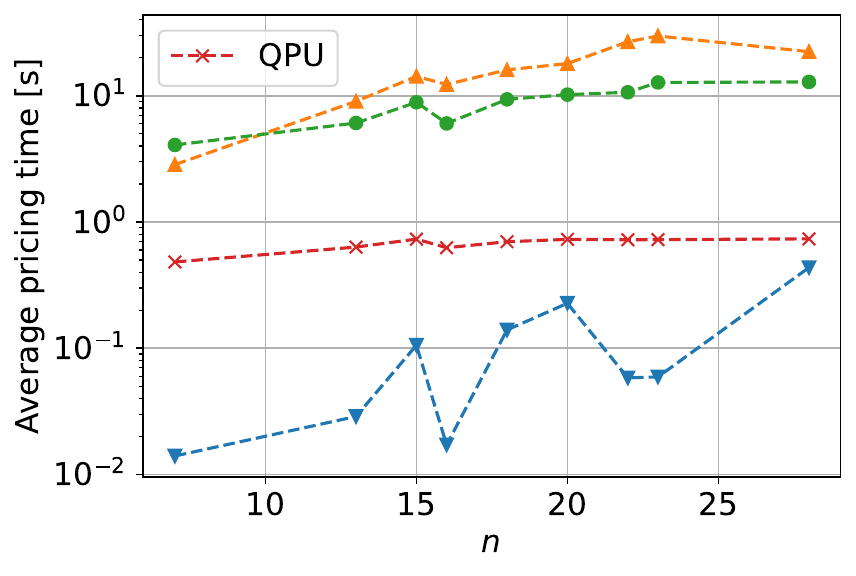}
    \caption{Average runtime per pricing problem in dependence of the instance size $n$.
The legend is depicted in Fig.~\ref{fig:res_pricing_ip}.
For QA, we also report the average time taken by the quantum processing unit (QPU).
\label{fig:res_pricing_time}}
\end{figure}
\begin{figure}
    \centering
    \includegraphics[height=4.2cm]{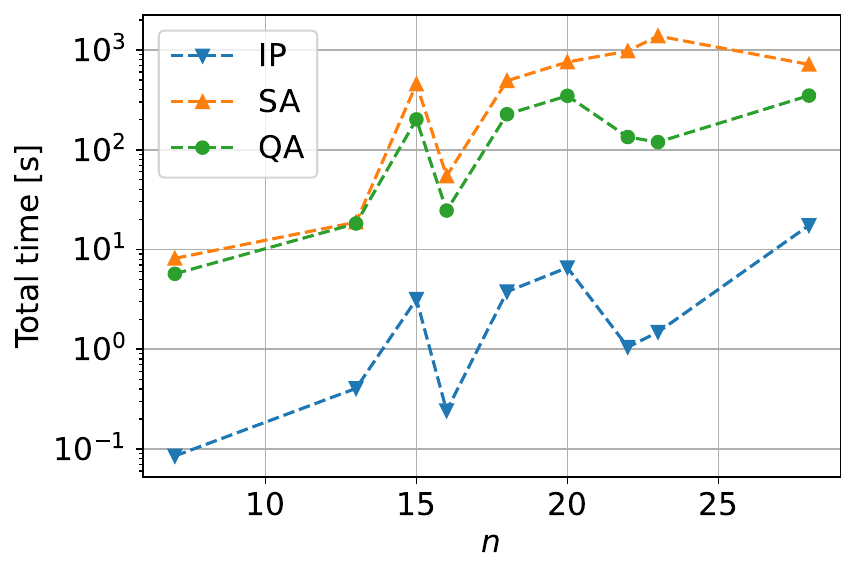}
    \caption{{Total runtime for solving the \eqref{eq:mp_cvrp} in dependence of the instance size $n$ for different pricers.
\label{fig:res_pricing_time_total}}}
\end{figure}

For QA and SA pricing, we use the QUBO model~\eqref{eq:qubo_costfunction_cpctsp} with penalty factors according to~\eqref{eq:PenaltyRatio}.
{In Table~\ref{tab:instances}, we report the actual densities of the pricing QUBOs.
We observe that the densities vary broadly between 0.09 and 0.44.
All densities are significantly larger than the device density of roughly 0.003.
Thus, embedding becomes prohibitive for larger instances.}

For exact pricing, we \blue{solve the} integer program~\blue{\eqref{eq:bixby}} to optimality via a branch-and-cut solver.
In our implementation of Model~\eqref{eq:bixby}, we add SECs via a callback routine whenever they are violated by an integer solution.
{We model and solve the IP~\eqref{eq:bixby} via the Python interface of the solver Gurobi~\cite{Gurobi}.
We implement the callback in Python using
the minimum-cut algorithm included in the Package NetworkX \blue{v3.0}~\cite{hagbger2008networkx}.}
We note that the IP model~\eqref{eq:bixby} for the CPCTSP differs significantly from the QUBO model~\eqref{eq:qubo_costfunction_cpctsp} since it is tailored for branch-and-cut solvers which require linear models.
Importantly, we return all routes with negative reduced cost found during the branch-and-cut process, not only the optimal solution.

In {Fig.}~\ref{fig:res_pricing_ip}, we compare the number of pricing subproblems which were solved exactly via integer programming.
As illustrated in Fig.~\ref{fig:concept_pricing}, exact pricing is only performed if the heuristic fails.
Avoiding exact pricing is crucial since the pricing problem is NP-hard and solving it exactly contributes significantly to the total runtime.
Indeed, exact pricing consumed over 99 \% of the total runtime for the experiments without a pricing heuristics.
{We thus consider the reduction of exact pricings as a measure of effectiveness of the heuristics.}
We observe that SA reduces
the number of exact pricing calls compared pricing exclusively via IP, often significantly, with an average reduction of 72~\%.
Also QA pricing reduces the number of exact pricings on all instances,
yet less significant than SA with an average reduction of 31~\%.
{We conclude that the solution quality of QA needs to improve in order
  to outperform classical pricing methods.} This is even more true
when considering modern implementations where IP pricing is typically
replaced as far as possible by fast heuristics. 

Additionally, we compare the average runtime per pricing problem in Fig.~\ref{fig:res_pricing_time}.
{The reported times are wall-clock times, including model building, embedding, sampling, un-embedding and solution construction.}
Here, we observe that IP outperforms both SA and QA by at least an order of magnitude across all instances.
We remark that all considered instances are small such that IP pricing is sufficiently fast.
However, the worst-case runtime of IP scales exponentially with problem size.
Thus, for larger instances where IP pricing is more expensive, both SA and QA pricing may lead to a runtime improvement.
Comparing QA and SA, we observe that QA is faster on all but the smallest instance.
Here, we remark that the major part of the {total runtime in} QA runtime is classical pre- and post-processing rather than time taken by the quantum processing unit (QPU).
In our experiments, the QPU time is roughly 10~\% of the total QA runtime on all instances.
From this, we conclude that QA runtime can be improved significantly by reducing classical overhead.

{
Finally, we compare the total runtime in Fig.~\ref{fig:res_pricing_time_total}.
The conclusions are similar to the average pricing runtime in Fig.~\ref{fig:res_pricing_time}.
QA is faster than SA on all instances by up to a factor of 10 ($n=23$).
Still, IP is at least an order of magnitude faster than QA.
IP, however, has an exponential worst-case runtime.
Thus, QA may outperform IP pricing when quantum hardware improves such that larger instances can be handeled.
}

To conclude, QA may be a valuable pricing heuristic but quantum
technology needs to improve considerably in order to outperform classical heuristics, even when comparing to meta-heuristics like SA,
which are typically less performant than problem-specific heuristics.
Indeed, in state-of-the art branch-price-and-cut implementations,
pricing is performed by specifically designed heuristics which are often very effective~\cite{Pecin2017,Pessoa2019}.
While SA outperforms QA in reducing the number of expensive exact pricing calls, QA often leads to shorter runtimes than SA.
Moreover, the major fraction of the QA runtime is consumed by classical computation.
Thus, QA bears the potential to significantly outperform SA in terms of runtime.
However, in order to solve instances where IP pricing is prohibitively
expensive, much larger quantum processors with a considerably increased connectivity are required.

\paragraph{Separation.}
Next, we benchmark RCC separation via QA against SA and an established RCC separation heuristic included in the \emph{CVRPSEP} library (CS)\blue{~\cite{Lysgaard_2003,Lysgaard_2004}}.
Fig.~\ref{fig:concept_sep} visualizes our evaluation procedure.
{We solve} the MP to optimality via column generation {and exact IP pricing with the same implementation as in the previous section.}
{Then}, we invoke a separation heuristic.
{While SA and QA are implemented with the same packages as in the pricing experiment, the RCC part of the CVRPSEP library is included in Python as a compiled C-library.}
As long as violated RCCs are found, we add them to the MP and re-optimize.
{Similar to the pricing experiments, we add all found violated RCCs.}
We consider only those instances from Table~\ref{tab:instances} which have a non-zero integrality gap.
The integrality gap is defined as \[g_\mathrm{LP}\coloneqq \frac{c^*-c^*_{LP}}{c^*}\in [0,1]\] with $c^*$ and $c^*_{LP}$ being the value of an optimum CVRP solution and an optimum solution to the relaxation \eqref{eq:mp_cvrp}, respectively.
The separation QUBO model~\eqref{eq:rcc_qubo} is smaller than the pricing QUBO model~\eqref{eq:qubo_costfunction_cpctsp}.
{Moreover, from Table~\ref{tab:instances}, we conclude that the separation QUBOs are sparser than the pricing QUBOs on average.}
This allows us to consider larger instances with up to $32$ customers.

\begin{figure}
    \centering
    \includegraphics[height=6cm]{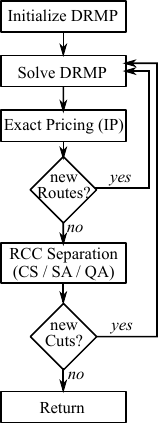}
    \caption{Schematic algorithm workflow for benchmarking separation heuristics.
	\label{fig:concept_sep}}
\end{figure}
\begin{figure}
    \centering
    \includegraphics[height=4.2cm]{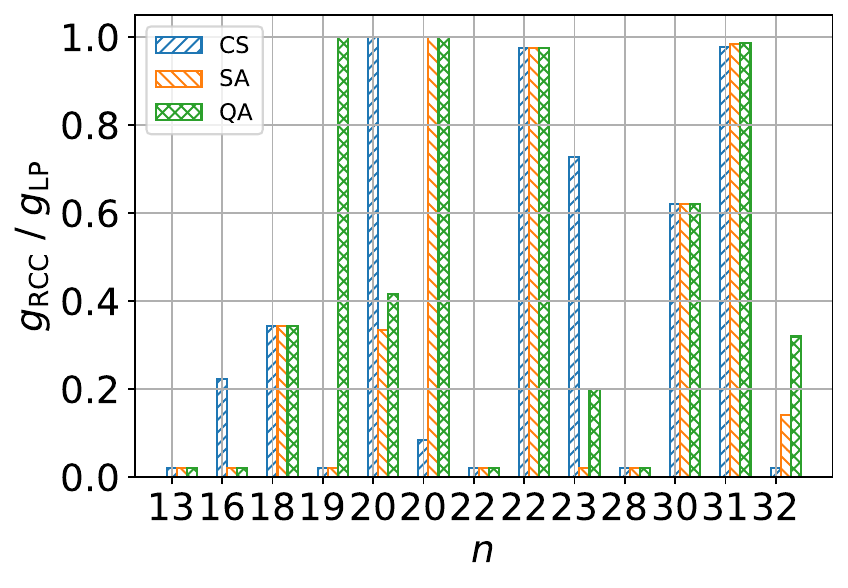}
    \caption{Decrease in integrality gap for different instance sizes. Lower is better.
	\label{fig:res_sep_gap}}
\end{figure}
\begin{figure}
    \centering
    \includegraphics[height=4.2cm]{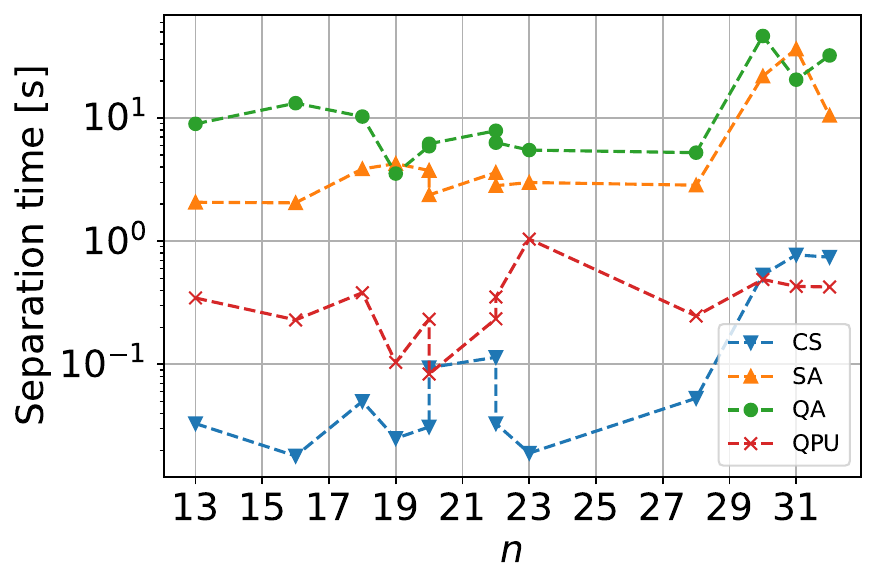}
    \caption{Total time spent on separation. The legend is depicted in Fig.~\ref{fig:res_sep_gap}. For QA, we also report the average time taken by the quantum processing unit (QPU).
	\label{fig:res_sep_time}}
\end{figure}

In Fig.~\ref{fig:res_sep_gap}, we visualize the decrease of the integrality gap after cutting plane separation.
To this end, we measure the ratio $g_\mathrm{RCC} / g_\mathrm{LP}$ where $g_\mathrm{RCC}$ is the integrality gap after separation.
Thus, a value of $0$ means that separation closed the integrality gap entirely, whereas a value of $1$ means that separation did not decrease the gap at all.
On {three} instances, QA separation yields smaller gaps than CVRPSEP.
However, on all instances SA delivers gaps at least as small as QA.
Similar to the pricing experiments, we conclude that QA separation bears potential but quantum
hardware needs to improve in order to outperform classical separation heuristics.

Finally, we compare the total time spent on RCC separation in Fig.~\ref{fig:res_sep_time}.
CVRPSEP is at least one order of magnitude faster than SA and QA on all instances.
Comparing SA and QA, we note that SA is faster on all but two instances
However, we remark that the pure QPU time consumed by QA is in the same range as the runtime of CVRPSEP.
As in our pricing experiments, we conclude that classical overhead in QA needs to decrease for a net runtime advantage.

\section{Conclusion}\label{sec:qvrp_concl}
In this work, we proposed using QA as a pricing and separation heuristic in a classical branch-price-and-cut algorithm for vehicle routing.
Although QA being a heuristic, the integration as a subroutine in a classical decomposition algorithm allows an exact solution of the input problem.
Moreover, the pricing and separation subproblems are particularly well suited for randomized heuristics like QA
since both require multiple, high-quality solutions quickly.
{Pricing and separation are standard routines in mixed-integer programming algorithms.
Thus, the concepts proposed in this work have many applications beyond vehicle routing.}

In order to apply QA, we developed QUBO models for the pricing and separation problem.
{Therein, we aimed at k}eeping both model dimension and density as low as possible {which }is crucial
for a successful implementation on existing quantum computers with restricted connectivity.
Here, developing even smaller and sparser QUBO models can further improve the performance.
For example, a technique known as domain-wall encoding could possibly further reduce density and dimension~\cite{Chancellor2019,Chen2021,Berwald2022}.
Although both dimension and density would not change asymptotically when using domain-wall encoding, a constant-factor reduction can be achieved which might result in practical benefits.
\blue{Furthermore, analyzing and adjusting the coefficient ranges of the QUBO models can improve the results from QA machines with limited precision.}

Having developed QUBO models for pricing and separation,
we integrated quantum annealing in {the pricing and separation part of }a branch-price-and-cut algorithm for the CVRP.
We then evaluated the algorithm on \blue{a small subset of} instances from literature and from a real-world application.
Our experiments revealed that quantum annealing is capable of reducing the number of exact pricing problem solutions.
As exact pricing problem solutions are computationally expensive, reducing their number is crucial for the overall runtime.
In cutting plane separation, we observed that quantum annealing yields smaller integrality gaps than a classical separation heuristic on some instances.
Small integrality gaps reduce the branch-price-and-cut runtime by limiting the number {of} branches.
{However, classical pricing and separation algorithms clearly outperform quantum annealing in our experiments.
Thus, our main conclusion is that quantum hardware needs to advance in order to achieve a net runtime advantage over classical methods.}
\blue{Fine-tuning QA parameters such as embedding strategy, chain strength and de-embedding may improve the performance of QA, but the
underlying hardware limitations clearly constitute the main bottleneck.}
In particular, larger quantum processors are required to reach instance sizes where classical methods become impractical.
Moreover, a large fraction of the quantum annealing runtime is consumed by classical computation.
Thus, accelerating classical pre- and post-processing can reduce the quantum annealing runtime significantly.
To summarize, our results indicate that QA can be a valuable pricing and separation heuristic if quantum hardware improves.
{However, a more exhaustive evaluation is required, including larger instances and more repetitions, to draw reliable conclusions for practical scenarios.}

A natural direction of further research is the implementation and evaluation of other quantum algorithms, for example QAOA.
{At the same time, developing and integrating further classical heuristics for the considered subproblems would enrich the comparison.}
{Furthermore, an integration and evaluation of our concepts
in a full branch-price-and-cut algorithm is required to asses their usefulness more holistically.
Therein, it can might even be beneficial to apply quantum algorithms to other sub-tasks like branching and node selection.}
In general, our work encourages the integration of fast quantum heuristics in established integer optimization algorithms if quantum technology advances.

\section*{Acknowledgment}
The authors thank Lilly Palackal, \blue{Infineon AG}, for providing instance data.
\blue{Furthermore, the authors thank the anonymous reviewers for their helpful comments.}
This research has been supported
by the Bavarian Ministry of Economic Affairs, Regional Development
and Energy with funds from the Hightech Agenda Bayern,
and
by the Federal Ministry for	Economic Affairs and Climate Action on the basis
of a decision by the German Bundestag through project QuaST.
	
	\bibliographystyle{unsrturl}  
	\bibliography{./bib_qvrp}
	
\end{document}